\theoremstyle{definition}
\newtheorem{example}{Example}[section]
\newtheorem{theorem}{Theorem}[section]
\newtheorem{lemma}{Lemma}[section]
\newcommand{\wrap}[1]{\begin{tabular}{@{}c@{}}#1\end{tabular}}
\newcommand{\bgtext}[1]{%
  \bgroup\markoverwith {\textcolor{#1}{\rule[-0.5ex]{2pt}{11pt}}}\ULon}
\ttfamily\color{green!60!black},
\lstdefinestyle{scala}{
    keywordstyle=\ttfamily\color{blue},
    basicstyle=\scriptsize\ttfamily,
    showspaces=false, 
    identifierstyle=\ttfamily\color{black}\bfseries, 
    commentstyle=\color{gray},
    stringstyle=\ttfamily,
    morekeywords={if,then,else,for,while,do}
}
\lstdefinestyle{comments}{
    language=scala,
    basicstyle=\scriptsize\ttfamily,
    keywordstyle=\ttfamily\color{black},
    identifierstyle=\ttfamily\color{black}\bfseries, 
    commentstyle=\color{black},
    stringstyle=\ttfamily,
    breakatwhitespace=true,
    breaklines=true,
}
\lstdefinestyle{lince}{
    keywordstyle=\ttfamily\color{blue}\bfseries,
    basicstyle=\scriptsize\ttfamily,
    showspaces=false, 
    identifierstyle=\ttfamily\color{black}\bfseries,
    morecomment=[l]{//},
    commentstyle=\color{gray},
    rulecolor=\color{black!40},         
    xleftmargin=1.5mm,
    xrightmargin=1.5mm,
    backgroundcolor=\color{black!5},
    numbersep=5pt, 
    numberstyle=\tiny\color{gray},   
    stepnumber=1,  
    captionpos=b, 
    belowcaptionskip=5mm,
    frame=single,                    
    morekeywords={if,if,then,else,for,while,do,repeat,until}
    stringstyle=\ttfamily\color{green!50!black},
    showspaces=false,                
    showstringspaces=false,
    showtabs=false,
    breakatwhitespace=true,
    breaklines=true,
    literate=*
             {==}{{\,==\,}}{3}
             {'=}{{'{\color{blue}\,\!=}}}{2}
             {:=}{{{\color{blue}:=}}}{2}
             {||}{{{\color{red!70!black}$\rule{1mm}{0mm}\|\rule{1mm}{0mm}$}}}{1}
}
\lstdefinestyle{error}{
    basicstyle=\scriptsize\sf\color{red!65!black},
    backgroundcolor=\color{red!10},
    xleftmargin=4mm,
    xrightmargin=4mm,
    framexleftmargin=4mm,
    framexrightmargin=4mm,    
    breakatwhitespace=true,
    breaklines=true,
}
\newcommand{\code}[1]{{\relsize{-1}\ttfamily #1}}
\renewcommand{\>}{\hspace*{5pt}}
\newcommand{\blue}[1]{\textcolor{blue}{#1}}
\newcommand{\prog}[1]{\ensuremath{\tt #1}\xspace}
\newcommand{\until}{\textcolor{blue}{\prog{for}}\xspace}
\newcommand{\progife}[3]{{ \blue{\prog if} \> #1 \> \blue{\prog then} \> 
{\prog #2} \> \blue{\prog else} \> {\prog #3}}}
\newcommand{\progwhile}[2]{{\blue{\prog while} \>  #1 \> \blue{\prog do} \> \{ \> {#2}  \> \}}}
\newcommand{\scomp}{\, \blue{;} \,}
\newcommand{\Reals}{\mathbb{R}}
\newcommand{\Rz}{\Reals_{\geq 0}}
\newcommand{\sem}[1]{\left \llbracket #1 \right \rrbracket}
\newcommand{\lrule}[3]{\textbf{#1}\quad\frac{#2}{#3}}
\newcommand{\bsto}{~\Downarrow~}
\newcommand{\ssto}[1][]{~\to^{#1}~}
\newcommand{\stp}{\mathit{stop}}
\newcommand{\skp}{\mathit{skip}}
\newcommand{\err}{\mathit{err}}
\providecommand{\comma}{,\operatorname{}\linebreak[1]}		\newcommand{\sep}{\kern1pt\comma\kern-1pt}
\newtheorem{proposition}{Proposition}
\newtheorem{corollary}{Corollary}
\newcommand{\titlerunning}{Formal Simulation and Visualisation of Hybrid Programs}
\newcommand{\authorrunning}{R. Correia, P. Mendes, R. Neves, J. Proença}
\title{Formal Simulation and Visualisation of Hybrid Programs
\\ {\normalsize An Extension of a Proof-of-Concept Tool } } 
\author{
Pedro Mendes
\institute{University of Minho, Portugal}
\email{pg50685@alunos.uminho.pt}
\and
Ricardo Correia
\institute{University of Minho, Portugal}
\email{pg47607@alunos.uminho.pt}
\and
Renato Neves
\institute{INESC-TEC \& University of Minho, Portugal}
\email{nevrenato@di.uminho.pt}
\and
José Proença
\institute{CISTER, Faculty of Sciences of the University of Porto, Portugal}
\email{jose.proenca@fc.up.pt}
}
\begin{document}
\maketitle\begin{abstract}

The design and analysis of systems that combine computational
behaviour with physical processes' continuous dynamics -- such as movement,
velocity, and voltage -- is a famous, challenging task.  Several theoretical
results from programming theory emerged in the last decades to tackle the
issue; some of which are the basis of a \emph{proof-of-concept} tool, called
Lince, that aids in the analysis of such systems, by presenting simulations of
their respective behaviours.

However being a proof-of-concept, the tool is quite limited with respect to  usability,
and when attempting to apply it to a set of common, concrete problems,
involving autonomous driving and others, it either simply cannot simulate them
or fails to provide a satisfactory user-experience.

The current work complements the aforementioned theoretical approaches with a
more practical perspective, by improving Lince along several dimensions: to name
a few, richer syntactic constructs, more operations, more informative plotting
systems and errors messages, and a better performance overall.
We illustrate our improvements via a variety of examples that involve both
autonomous driving and electrical systems.

\end{abstract}

\section{Introduction}
\label{Introduction}
\noindent \textbf{Motivation and context.} This paper concerns the design and
analysis of hybrid systems (\emph{i.e.} those that combine discrete with continuous
behaviour) from a programming-oriented perspective. Such a view emerged recently
in a series of
works~\cite{Platzer10,neves18,goncharov2020implementing,kamburjan22}, and
revolves around the idea of importing principles and techniques from
programming theory to better handle the behaviour of hybrid
systems. In this context programs combine standard program constructs, such as
conditionals and while-loops, with certain kinds of differential statement
meant to express the dynamics of physical processes, such as time, energy, and
motion. Consider the following example of such a program:
\begin{align}
\label{eq:mot}
& p' = v, v' = 2 \> {\prog{\blue{for}}} \> 1 \> \scomp
\> p' = v, v' = -2 \> {\prog{\blue{for}}} \> 1  
\end{align}

\begin{wrapfigure}[7]{r}{0.4\textwidth} 
~\\[-4mm]
\hphantom{1cm} \includegraphics[width=0.35\textwidth]{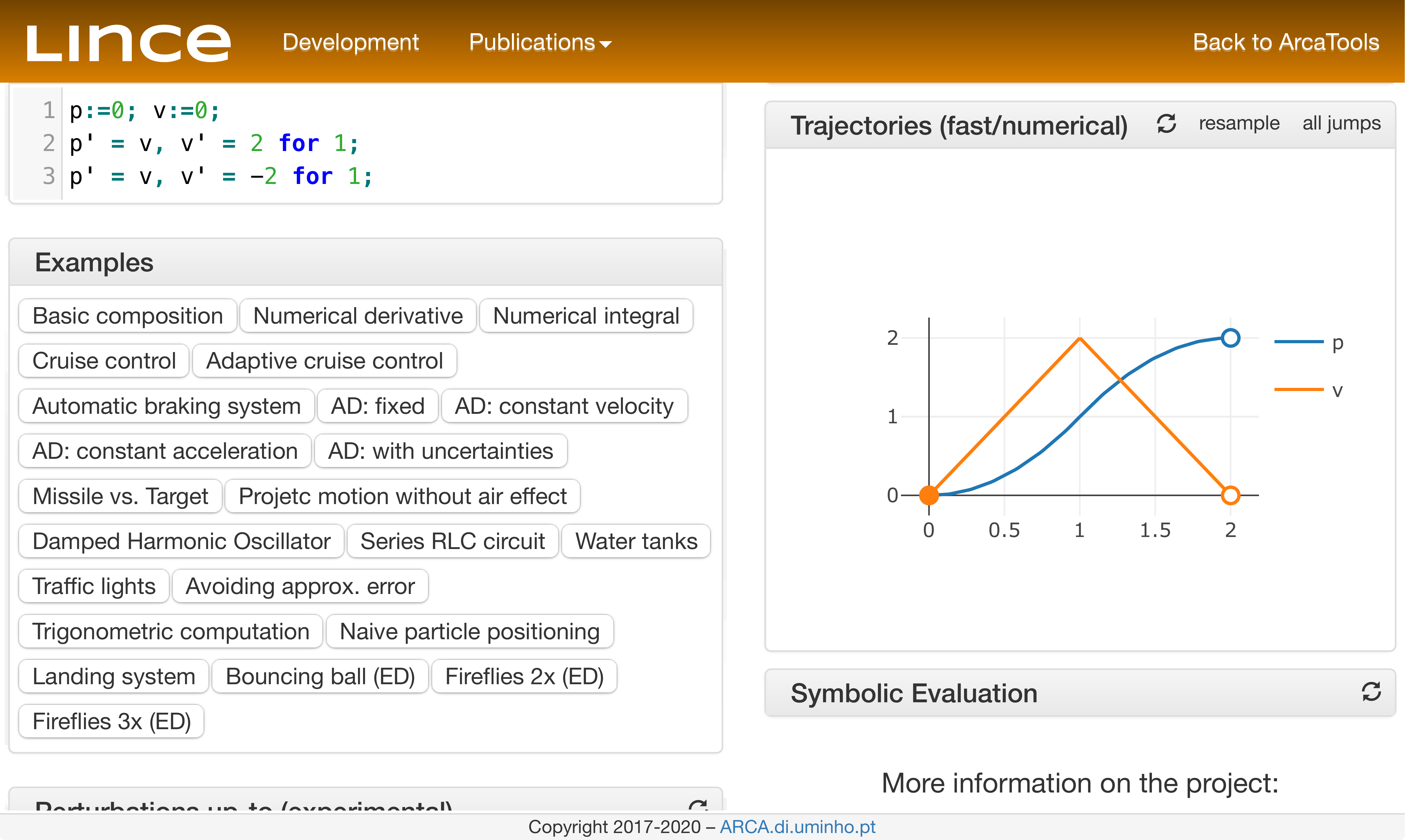}
\caption{Simulation of \eqref{eq:mot}.}
\label{fig:mot}
\end{wrapfigure}
In a nutshell, it is a sequential composition $(\scomp)$ of two 
programs where each expresses how the position $(p)$ and velocity $(v)$ of a
vehicle evolve over time. The program on the left ($p' = v, v' = 2 \>
{\prog{\blue{for}}} \> 1$) is a differential statement that reads {``the
vehicle accelerates at the rate of $2\,\sfrac{m}{s^2}$ for 1 second''}.
The other program corresponds to a deceleration. Both position and
velocity over time  are presented in \cref{fig:mot}, where we see that the
vehicle travels 2 meters and then stops.

Actually there has been a rapid proliferation of such systems, not only in the
domain of autonomous driving but also in the medical industry and industrial
infrastructures, among others~\cite{Platzer10,gunes14,lee16,neves18}. This
spurred extensive research on languages, semantics, and tools for their design
and analysis. An example is our
work~\cite{goncharov2019,goncharov2020implementing} on the semantics of hybrid
programs -- \emph{i.e.} those that combine program constructs with differential
statements, such as in~\eqref{eq:mot} -- from which arises a mathematical basis
for reasoning about their behaviour, both operationally and denotationally. A
proof-of-concept tool, called Lince, also emerged from this: its engine is a
previously developed operational semantics~\cite{goncharov2020implementing}
that yields trajectories of hybrid programs, just as we saw in \cref{fig:mot}.
However because our focus was rather theoretical, the tool was not developed
with usability in mind, and thus lacks basic features for tackling a broad
range of important scenarios. Let us illustrate this problem with a very simple
example.

\smallskip
\noindent
\textbf{Problem scenario.} Suppose that we wish to move a stationary object a
distance of $dist$ meters -- a basic task in autonomous driving. For simplicity
assume that we have access only to the acceleration rates $a\,\sfrac{m}{s^2}$ and $-a\,\sfrac{m}{s^2}$, where $a > 0$. Our mission can be accomplished by taking the following
variation of~\cref{eq:mot}, 
\begin{align}
& p' = v, v' = a \> {\prog{\blue{for}}} \> t \> \scomp
\> p' = v, v' = -a \> {\prog{\blue{for}}} \> t  
\end{align}
for a suitable duration~$t$. Then in order to calculate $t$ (\emph{i.e.} the
prescribed duration of each differential statement) we simply observe that,
\[
        dist = \int^t_0 v_a(x) \, d x \> + \> \int^t_0 v_{-a}(x) \, dx 
\]
where $v_a(x) = a \cdot x$ and $v_{-a}(x) = v_a(t) + -a \cdot x$ are the
velocity functions with respect to the time intervals $[0,t]$ and $[t,2\cdot
t]$ associated with the program's execution. We now observe, by recalling
\cref{fig:mot}, that the value $dist$ corresponds to the area of a triangle
with basis $2 \cdot t$ and height $v_a(t)$. This geometric shape yields the
equations,
\[
        \begin{cases}
                dist & = \sfrac{1}{2} \cdot (2 \cdot t) \cdot v_a(t) 
                \hspace{0.2cm} \text{\textbf{(area)}} \\
                v_a(t) & = a \cdot t \hspace{0.2cm} \text{\textbf{(height)}} 
        \end{cases}
        \hspace{0.5cm} \Longrightarrow \hspace{0.5cm}
        t = \sqrt{ \frac{dist}{a} }
\]
Finally observe that if $dist = 3$ and $a = 1$ then $t= \sqrt{3}$.
Unfortunately the previous version of Lince does not support square root
operations which renders our mission impossible to accomplish.

\smallskip
\noindent
\textbf{Contributions and outline.}
As already alluded to, this paper complements our previous theoretical work on
the semantics of hybrid
programming~\cite{goncharov2019,goncharov2020implementing}. Specifically it
improves our proof-of-concept tool Lince so that it can handle a myriad of
important scenarios, whilst maintaining both its simplicity and theoretical
underpinnings.  The improvements were made along different dimensions, and we
highlight the most relevant ones next\footnote{The improved version can be
        checked online at \url{http://arcatools.org/lince}.}.

\smallskip
\noindent
\emph{Extension of basic operations.} As illustrated before, the previous version of
Lince lacked essential arithmetic operations for handling most basic tasks.
Thus as the first main contribution we added standard arithmetic operations,
including divisions, trigonometric functions, and square root extractions.
Notably the fact that many of these operations are partial required us to
extend the operational semantics developed in~\cite{goncharov2020implementing}
(the main engine of Lince) with the possibility of failure.  The extended
semantics is detailed in \cref{sec:modelling} and it is of course the basis of
the new engine behind improved Lince. 

\smallskip
\noindent
\emph{Extension of numerical methods.} Again because our focus in previous work was
rather theoretical the previous version of Lince was unable to simulate
standard scenarios in hybrid programming. A main reason for this was our method
of obtaining solutions of systems of ordinary differential equations (ODEs),
which although \emph{exact} lacked in scalability. Precisely for this reason we
now integrate a complementary, numerical solver in Lince with the obvious
compromise that the solutions obtained for such systems are no longer exact.

The benefits of the extended language (and respective semantics), the numerical
solver, and a number of quality-of-life features, are summarised
in~\cref{sec:interpreter} and illustrated with a standard, running example
concerning the famous concept of harmonic oscillation.

\smallskip
\noindent
\emph{Extension of visualisation mechanisms.} Lince is constituted by two core
components: the simulator which, by recurring to the aforementioned operational
semantics, parses a received program and presents its output with respect to a
\emph{single} time instant. And the visualiser which presents (a sample of) the
trajectory over time respective to the program at hand, by querying the
operational semantics for a certain sequence of time instants.  After trying to
properly visualise the behaviour of several types of hybrid program with Lince
we identified two major limitations with respect to this architecture. First many
real-world problems involve multiple spatial dimensions and thus the described
view of trajectories over time is often not the best representation of 
a hybrid program's behaviour. Second the user is often interested in
observing the overall program behaviour for varying initial conditions,
concerning for example position and velocity. We therefore present in
\cref{sec:visualiser} an improved visualiser for Lince that precisely addresses
these two issues. We illustrate it via another classical scenario in autonomous
driving, \emph{viz.} manoeuvring around an obstacle.

In \cref{sec:pursuit} we illustrate that, whilst keeping its simplicity, Lince
can now handle complex central problems in the theory of hybrid systems; we
focus specifically on the task of one player pursuing another, \emph{e.g.} a
vehicle, a drone, or simply a projectile. Such pursuit games were discussed for
example in~\cite{manna93,anderson93,chaochen93,krilavicius05}, from an
(hybrid-)automata, state-chart, and duration calculus perspective.  Here we
present a programming-oriented approach. Finally in \cref{sec:conc} we discuss
future work and conclude.

\smallskip
\noindent
\textbf{Related work.}
Several tools for the design and analysis of hybrid systems were already
proposed, \emph{e.g.} in the areas of deductive verification~\cite{Platzer10},
model checking~\cite{ballarini11,frehse11,bresolin15},
simulation~\cite{klee2007,fritzson2014,kamburjan22,goncharov2020implementing},
and program semantics~\cite{Platzer10, kamburjan22, goncharov2020implementing}.
But only a few are committed to a programming-oriented approach, rooted on
formal semantics, and with effective simulation capabilities. The only ones we
are aware of are \cite{kamburjan22} and our own tool
Lince~\cite{goncharov2020implementing}.  Interestingly both cases adopt
complementary approaches as well: the former harbours a very powerful
concurrent language, particularly well-suited for large-scale distributed
systems. The latter, harbouring a sequential while-language, aims at being
minimalistic  whilst still capturing a broad range of interesting problems on
which to study different aspects of (pure) hybrid computation at a suitable
abstraction level.

Aside from the obvious pedagogical benefit, our minimalistic approach also
allows to capitalise on different programming theories more easily.  For
example already in~\cite{goncharov2020implementing}  we connected our tool to a
compositional, denotational semantics -- particularly well-suited to study
hybrid program equivalence and combinations with other paradigms. An analogous
concurrent semantics for~\cite{kamburjan22} would be notoriously more difficult
to achieve (\emph{cf.}~\cite{reynolds98,winskel93}). Similarly our language is
amenable to algebraic reasoning in the style of (weak) Kleene
algebras~\cite{kozen97,hofner_phdthesis} whilst the connection between the
latter and concurrent object-oriented programming (as adopted
in~\cite{kamburjan22}) is less clear. 

\section{Lince's Foundations Extended with the Possibility of Failure}
\label{sec:modelling}

We now extend part of Lince's foundations with the possibility of failure.
Specifically we present an extension of the language
in~\cite{goncharov2020implementing} with partial operations, such as division
and square root extraction, and introduce a corresponding operational
semantics. As explained in the introduction, such is necessary for extending
Lince to `real-world problems' whilst preserving its merit of having a firm,
mathematical basis. 

\smallskip
\noindent
\textbf{Language.}
First we postulate a finite set $X = \{ x_1, \dots, x_n \} $ of variables and a
stock of partial functions $f : \Reals^n \xrightharpoonup{\hspace{0.1cm}}
\Reals$ that contains the usual arithmetic operations. Then we define
expressions and boolean conditions via the following BNF grammars,
\begin{align*}
        e ::= x \mid f(e,\dots,e)
        \hspace{2cm}
        b ::= e \leq e \mid b \wedge b \mid
        b \vee b \mid \neg\, b \mid \prog{tt} \mid \prog{ff}
\end{align*}
We omit the explanation of these grammars as they are widely used (see
\emph{e.g.}~\cite{winskel93,reynolds98}).  Next, we qualify as `linear' those
expressions $e$ which aside from the use of variables involve only the
operations $+$ and $r \cdot (-)$ for some $r \in \Reals$.  For example the
expression $2 \cdot x$ is linear but the expression $x \cdot x$ is not.  The
concept of linearity is key in the grammar of hybrid programs which we present
next.

Programs are built according to the following BNF grammars,
\begin{align*}
        \prog{a} & ::= x'_1 = \ell_1, \dots, x'_n = \ell_n \> \until  \> e
        \mid x := e
        \\
        \prog{p} & ::= {\prog a} \mid {\prog p} \scomp {\prog p} \mid \progife{b}{p}{p}
        \mid \progwhile{b}{{\prog p}}
\end{align*}
where the terms $\ell_i$ ($1 \leq i \leq n$) are linear expressions. We qualify
as `atomic' those hybrid programs that are built according to the first
grammar. They can be either classical assignments or \emph{differential}
statements as described in the introduction. The linearity constraint is here
imposed merely to ensure that the latter kind of statement will always have
unique solutions, which renders our semantics more lightweight whilst still
being able to treat a broad range of problems (see more details
in~\cite{goncharov2020implementing}).

The language of hybrid programs $\prog{p}$ itself is simply the usual
while-language \cite{winskel93,reynolds98} extended with the aforementioned
differential statements.  It is easy to check that our grammar indeed extends
that in the previous version of Lince~\cite{goncharov2020implementing} where
\emph{all} expressions involved in the assignments and the durations of
differential statements had to be linear.  This has of course significant
implications in the operational semantics introduced
in~\cite{goncharov2020implementing}.

\smallskip
\noindent
\textbf{Operational semantics.}
We need a series of preliminaries. First for simplicity we abbreviate
differential statements $x'_1 = \ell_1, \dots, x'_n = \ell_n \> \until \> e$
simply to $\vec{x}' = \vec{\ell} ~ \until ~ e$, where $\vec{x}'$ and
$\vec{\ell}$ abbreviate the corresponding vectors of variables $x_1' \dots
x_n'$ and linear expressions $\ell_1 \dots \ell_n$. We call functions of the
type $\sigma : X \to \Reals$ \emph{environments}; they map variables to the
respective valuations.  We use the notation $\sigma[\vec x \mapsto \vec v]$ to
denote the environment that maps each $x_i$ in $\vec x$ to $v_i$ in $\vec{v}$
and the remaining variables as in $\sigma$. Finally we denote by
$\phi_\sigma^{\vec{x}' = \vec{\ell}}:  \Rz \to \Reals^n$ the (unique) solution
of a system of differential equations $\vec{x}' = \vec{\ell}$ with $\sigma$ as
the initial condition (recall our previous constraint about linearity which
ensures that such solutions indeed exist). When clear from context, we omit
both the superscript and subscript in $\phi_\sigma^{\vec{x}' = \vec{\ell}}$.
Next, for an expression $e$ the notation $\sem{e}(\sigma)$ denotes the standard
(partial) interpretation of expressions~\cite{winskel93,reynolds98} according
to $\sigma$, and analogously for $\sem{b}(\sigma)$ where $b$ is a boolean
expression. For example $\sem{x + 1}(\sigma) = \sigma(x) + 1$ and
$\sem{\sfrac{1}{x}}(\sigma)$ is undefined if $\sigma(x) = 0$. 

We now present an operational semantics for the language. Following traditions
in programming theory~\cite{olderog92, winskel93,reynolds98}, we present it
from two different, complementary perspectives, which gives a much more
complete understanding of the language's features. Specifically we present the
semantics in two different styles: one formalises the idea of a machine
``running'' a hybrid program and describes its \emph{step-by-step evolution}.
The other abstracts away from all \emph{intermediate steps} of this machine and
is therefore generally more suitable to reason about ``input-output
behaviours'' (although we do not explore such a feature here).  Whilst the
former style is the basis of Lince's new version, the latter style is
conceptually more intuitive and therefore we present it first. The current
section concludes with a proof that both semantics are in fact equivalent.  The
curious reader can consult for example~\cite{winskel93,reynolds98} for a
thorough account on the key differences between the small-step and big-step
styles in general program semantics.

Our `big-step' operational semantics is given by an `input-output' relation
$\Downarrow$ which relates programs ${\prog p}$, environments $\sigma$, and
time instants $t$ to outputs $v$. The expression ${\prog p} \sep \sigma \sep t
\bsto v$ can be read as ``at time instant $t$ the program {\prog p} starting from 
state $\sigma$ outputs $v$''. The relation $\Downarrow$ is built inductively
according to the rules in \cref{big_step}.  Specifically the first three rules
describe how differential statements are evaluated: first one computes the
duration $\sem{e}(\sigma)$ of the differential statement at hand and an error
is raised if $\sem{e}(\sigma)$ is undefined; otherwise the output $v$ is the
respective modified state (as dictated by the differential statement) paired
with one of the flags $\stp$ or $\skp$.  Intuitively the flag $\stp$ indicates
that we `reached' the time instant at which the program needs to be evaluated
and therefore the evaluation can stop moving forward in time, which fact is
reflected in rule $(\textbf{seq-stop})$. The flag $\skp$ is simply the negation
of $\stp$. The remaining rules follow analogous principles and therefore we
refrain from detailing them -- instead we will briefly show how the semantics
works via instructive, concrete examples.
\begin{figure}[h]
\begin{minipage}{1\textwidth}
\begin{flalign*} 
        \lrule{(diff-skip)}{
                \sem{e}(\sigma) = t
        }{
                \vec{x}' = \vec{\ell} \>
                \until \> e \sep\sigma \sep t 
                \bsto
                \skp\sep\sigma[\vec{x} \mapsto \phi(t)]
        }
\end{flalign*} 
\vspace{-4mm}
\begin{flalign*}
        \lrule{(diff-stop)}{
                \sem{e}(\sigma) > t
        }{
                \vec{x}' = \vec{\ell} \>
                \until \> e \sep\sigma\sep t
                \bsto
                \stp\sep\sigma[\vec{x} \mapsto \phi(t)]
        }
&&
\lrule{(diff-err)}{\sem{e}(\sigma) \text{ undefined }}{
  \vec{x}' = \vec{\ell} \>
  \until \> e \sep\sigma\sep t
    \bsto
    \err
}
\end{flalign*} 
\vspace{-4mm}
\begin{flalign*}
        \lrule{(asg-skip)}{
                \sem{e}(\sigma) \text{ defined }
        }{
                x := e\sep\sigma\sep 0
                \bsto
                \skp\sep\sigma[x \mapsto \sem{e}(\sigma)]
        }
&&
        \lrule{(asg-err)}{ 
                \sem{e}(\sigma) \text{ undefined }
        }{
                x := e\sep\sigma\sep t
                \bsto
                \err
        }
\end{flalign*} 
\vspace{-4mm}
\begin{flalign*}
        \lrule{(seq-skip)}{
                {\prog p}\sep\sigma\sep t \bsto \skp\sep\tau \qquad 
                {\prog q}\sep\tau \sep u \bsto v
        }{
                {\prog p}\scomp {\prog q}
                \sep\sigma\sep t + u \bsto v
        }
\end{flalign*} 
\vspace{-4mm}
\begin{flalign*}
        \lrule{(seq-stop)}{
                {\prog p}\sep\sigma\sep t \bsto \stp\sep\tau
        }{
                {\prog p}\scomp {\prog q}\sep\sigma\sep 
                t  \bsto \stp\sep\tau
        }
&&
        \lrule{(seq-err)}{{\prog p}\sep\sigma\sep t \bsto \err
        }{
                {\prog p}\scomp {\prog q}\sep\sigma
                \sep t  \bsto \err
        }
\end{flalign*} 
\vspace{-4mm}
\begin{flalign*}
        \lrule{(if-true)}{
                \sem{{b}}(\sigma)=\mathtt{tt}
                \qquad {\prog p} \sep\sigma\sep t\bsto v
        }{
                { \progife{b}{p}{q} } \sep\sigma\sep t \bsto v
        }
\end{flalign*} 
\vspace{-4mm}
\begin{flalign*}
        \lrule{(if-false)}{
                \sem{b}(\sigma)=\mathtt{ff}\qquad {\prog q} 
                \sep\sigma\sep t\bsto v
        }{
                \progife{b}{p}{q}\sep\sigma\sep t \bsto v
        }
&&
        \lrule{(if-err)}{
                \sem{b}(\sigma) \text{ undefined }
        }{
                \progife{b}{p}{q}\sep\sigma\sep t \bsto \err
        }
\end{flalign*} 
\vspace{-4mm}
\begin{flalign*}
        \lrule{(wh-true)}{
                \sem{b}(\sigma) = \mathtt{tt}\qquad {\prog p} 
                \scomp \progwhile{b}{p}\sep\sigma\sep t\bsto v 
        }{
                \progwhile{b}{{\prog p}}\sep\sigma\sep t 
                \bsto 
                v
        }
\end{flalign*}
\vspace{-4mm}  
\begin{flalign*}
        \lrule{(wh-false)}{
                \sem{b}(\sigma) = \mathtt{ff}
        }{
                \progwhile{b}{{\prog p}}\sep\sigma\sep 0
                \bsto 
                \skp\sep\sigma 
        }
&&
        \lrule{(wh-err)}{
                \sem{b}(\sigma) \text{ undefined }
        }{
                \progwhile{b}{{\prog p}}\sep\sigma\sep t
                \bsto 
                \err
        }
\end{flalign*}
\end{minipage}
\caption{Extension of the big-step operational semantics 
in~\cite{goncharov2020implementing} with the possibility of failure.}
\label{big_step}
\end{figure}

\begin{example}
        Let us consider the following very simple program, 
        \begin{align*}
        & x' = -1 \> {\prog{\blue{for}}} \> 1 \> \scomp
        \> x := \sfrac{1}{x}
        \end{align*}
        which continuously decreases the value of variable $x$ during 1 second
        and then applies the (discrete) operation $x := \sfrac{1}{x}$.  Suppose
        as well that our initial state is the environment $\sigma$ defined by
        $x \mapsto 1$.  Then by an application of rule \textbf{(diff-stop)} one
        deduces that this program outputs the environment $x \mapsto 1 - t$ at
        every time instant $t < 1$. On the other hand, by an application of
        rules \textbf{(diff-skip)}, \textbf{(asg-err)}, and \textbf{(seq-skip)}
        one easily deduces that the evaluation of the program fails at every
        time instant $t \geq 1$, due to a division by $0$.

        Notably the fact that failure occurs only at the time instants $t \geq
        1$ is a fundamental difference with respect to the famous hybrid programming
        language detailed in~\cite{Platzer10}. In the \emph{op. cit.} the
        language was designed in the spirit of Kleene algebra, which in
        particular forces the previous program to be \emph{indistinguishable}
        from \emph{e.g.} the program $x := \sfrac{x}{0}$. Whilst such a feature
        could be desirable in some verification scenarios it is clearly
        unnatural in a simulation-based environment such as ours.

Let us continue unravelling prominent features of our semantics with another
example. Consider the following hybrid program, \begin{align*} & \progwhile{x
\not = 0}{ x' = -1  \> {\prog{\blue{for}}} \> \sfrac{x}{2} \, } \> \scomp \> x
:= \sfrac{1}{x} \end{align*} paired with the environment $x \mapsto 1$.  This
program is an instance of a so-called Zeno loop: \emph{viz.} the loop involved
unfolds \emph{infinitely} many times with the duration of each iteration
becoming shorter and shorter (see details \emph{e.g.} in
\cite{goncharov2020implementing}).  In this particular case it is
straightforward to check that the duration of the $i$-th iteration is given by
$\sfrac{1}{2^i}$, and thus that the total duration $\sum_{i = 1}^{\infty}
\sfrac{1}{2^i}$ of the loop will be $1$. Now, by applying the operational rules
in \cref{big_step} one can successfully evaluate the program at every time
instant $t < 1$ (intuitively because every such $t$ is reached in a
\emph{finite} number of iterations). The same is \emph{false} for time instant $t = 1$ since
such requires a complete unfolding of the loop which is of course
computationally unfeasible. Thus operationally the potential point of failure $x :=
\sfrac{1}{x}$ in the program above never occurs, as the Zeno loop makes it
impossible to actually reach this command in the evaluation. 
These infinite behaviours are bounded in Lince by manually setting limits on the total time and on the number of unfoldings of while-loops, adjustable for each program.
\end{example}

Next, the semantics in the aforementioned `small-step' style is given in the
form of a relation $\ssto$ that is defined inductively according to the rules
in \cref{small_step}. These rules follow an analogous reasoning to the ones in
\cref{big_step} so we refrain from repeating explanations. 

\newcommand{\prem}[1]{(\textit{if\/ }#1)}
\newcommand{\nline}{\vspace{-8mm}}
\begin{figure*}[h]
\begin{minipage}{1\textwidth}
\begin{flalign*}
\textbf{(asg$^\to$)}
&&
x := e \sep\sigma\sep t
\ssto
\skp \sep \sigma[ x \mapsto \sem{e}(\sigma) ] \sep t 
&&
\prem{\sem{e}(\sigma) \text{ defined}}
\end{flalign*} \nline
\begin{flalign*}
\textbf{(asg-err$^\to$)}
&&
x := e\sep\sigma\sep t 
\ssto
\err 
&&
\prem{\sem{e}(\sigma) \text{ undefined}}
\end{flalign*} \nline
\begin{flalign*}
\textbf{(diff-stop$^\to$)}
  &&{\vec{x}' = \vec{\ell} \>
  \until \> e} \sep\sigma\sep t 
    \ssto
    \stp\sep\sigma [\vec{x} \mapsto \phi(t)]\sep 0
  &&
  \prem{\sem{e}(\sigma) > t}
\end{flalign*} \nline
\begin{flalign*}
\textbf{(diff-skip$^\to$)}
&& {\vec{x}' = \vec{\ell} \>
  \until \> e} \sep\sigma\sep t 
    \ssto
    \skp\sep\sigma[\vec{x} \mapsto \sigma(t)]\sep t - \sem{e}(\sigma)
&&
\prem{\sem{e}(\sigma) \leq t}
\end{flalign*} \nline
\begin{flalign*}
\textbf{(diff-err$^\to$)}
&& {\vec{x}' = \vec{\ell} \>
  \until \> e} \sep\sigma\sep t 
    \ssto \err 
&&
\prem{\sem{e}(\sigma) \text{ undefined}}
\end{flalign*} \nline
\begin{flalign*}
\textbf{(if-true$^\to$)}
&&
\progife{b}{p}{q}\sep\sigma\sep t \ssto {\prog p} \sep\sigma\sep t
&&
\prem{\sem{b}(\sigma) =\mathtt{tt}}
\end{flalign*} \nline
\begin{flalign*}
\textbf{(if-false$^\to$)}
&&
\progife{b}{p}{q}\sep\sigma\sep t \ssto {\prog q}\sep\sigma\sep t
&&
\prem{\sem{b}(\sigma)=\mathtt{ff}}
\end{flalign*} \nline
\begin{flalign*}
\textbf{(if-err$^\to$)}
&&
\progife{b}{p}{q}\sep\sigma\sep t \ssto \err
&&
\prem{\sem{b}(\sigma) \text{ undefined}}
\end{flalign*} \nline
\begin{flalign*}
\textbf{(wh-true$^\to$)}
&&
\progwhile{b}{{\prog p}}\sep\sigma\sep t \ssto {\prog p} 
\scomp \progwhile{b}{{\prog p}}\sep\sigma\sep t
&&
\prem{\sem{b}(\sigma)=\mathtt{tt}}
\end{flalign*} \nline
\begin{flalign*}
\textbf{(wh-false$^\to$)}
&&
\progwhile{b}{{\prog p}}\sep\sigma\sep t \ssto \skp\sep\sigma\sep t
&&
\prem{\sem{b}(\sigma)=\mathtt{ff}}
\end{flalign*}\nline 
\begin{flalign*}
\textbf{(wh-err$^\to$)}
&&
\progwhile{b}{{\prog p}}\sep\sigma\sep t \ssto \err
&&
\prem{\sem{b}(\sigma) \text{ undefined} }
\end{flalign*} 
\begin{flalign*}
\lrule{(seq-stop$^\to$)}{{\prog p}\sep\sigma\sep t \ssto \stp\sep\sigma'\sep t'}{
  {\prog p}\scomp {\prog q}\sep\sigma\sep t  \ssto \stp\sep\sigma'\sep t'}
&&
\lrule{(seq-skip$^\to$)}{{\prog p}\sep\sigma\sep t \ssto \skp\sep\sigma'\sep t'}{
  {\prog p}\scomp {\prog q}\sep\sigma\sep t  \ssto { \prog q}\sep\sigma'\sep t'}
\end{flalign*}
\begin{flalign*}
\lrule{(seq-err$^\to$)}{{\prog p}\sep\sigma\sep t \ssto \err}{
  {\prog p}\scomp {\prog q}\sep\sigma\sep t  \ssto \err }
  \hspace{0.9cm}
&&
\lrule{(seq$^\to$)}{{\prog p}\sep\sigma\sep t \ssto {\prog p'}\sep\sigma'\sep t'}{
  {\prog p}\scomp {\prog q}\sep\sigma\sep t  \ssto {\prog p'};{\prog q}\sep\sigma'\sep t'} \quad \prem{{\prog p'} \neq \stp \textit{ and } {\prog p'} \neq \skp} 
&&
\end{flalign*} 
  \end{minipage}
  \caption{Extension of the small-step operational semantics 
        in~\cite{goncharov2020implementing} with the possibility of failure.}
  \label{small_step}
\end{figure*}
As detailed in Corollary~\ref{cor:det} our small-step semantics is
deterministic. This is of course a key property in what concerns its
implementation and subsequent use in Lince for simulating hybrid programs. The
corollary is based on the following theorem.

\begin{theorem}
  \label{thm:determ}
  For every program ${\prog p}$, environment $\sigma$, and time instant
  $t$ there is \emph{at most one} applicable reduction rule.
\end{theorem}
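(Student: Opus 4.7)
The plan is to proceed by structural induction on the program ${\prog p}$, and for each syntactic class verify that, for any environment $\sigma$ and time instant $t$, the side conditions of the rules whose conclusion matches the configuration ${\prog p}\sep\sigma\sep t$ are pairwise mutually exclusive. Note that the statement is one of non-overlap, not totality, so there is no need to show that some rule always applies; this simplifies matters considerably.

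In the atomic cases the argument is a direct case split on the interpretation of the governing expression. For $x := e$, the two candidate rules asg$^\to$ and asg-err$^\to$ are separated by the dichotomy ``$\sem{e}(\sigma)$ defined'' versus ``undefined,'' which is exclusive by construction of the partial interpretation. For a differential statement $\vec{x}' = \vec{\ell} \> \until \> e$, the three rules diff-stop$^\to$, diff-skip$^\to$, diff-err$^\to$ are indexed by the trichotomy $\sem{e}(\sigma) > t$, $\sem{e}(\sigma) \leq t$, and $\sem{e}(\sigma)$ undefined, which partitions all possibilities. The cases for if-then-else and while-loops are entirely analogous, being driven by the trichotomy $\sem{b}(\sigma) \in \{\mathtt{tt},\mathtt{ff},\text{undefined}\}$; no appeal to the induction hypothesis is required for any of these.

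The only interesting case is sequential composition ${\prog p} \scomp {\prog q}$, for which four rules are potentially applicable: seq-stop$^\to$, seq-skip$^\to$, seq-err$^\to$, and seq$^\to$. Here I would invoke the induction hypothesis on ${\prog p}$, which supplies at most one reduction. The four seq-rules then dispatch on the shape of that reduct: whether it is a $\stp$-configuration, a $\skp$-configuration, the error token $\err$, or none of the above --- the latter case being precisely what the explicit side condition ${\prog p'} \neq \stp$ and ${\prog p'} \neq \skp$ on rule seq$^\to$ enforces. Since these four alternatives are mutually exclusive and ${\prog p}$ reduces in at most one way by IH, only one seq-rule can fire.

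The main (and really only) obstacle is this last case: one has to check that the explicit side condition on seq$^\to$ correctly carves away the overlap with seq-stop$^\to$ and seq-skip$^\to$, and that $\err$ is automatically excluded there because it carries no residual program to plug into the conclusion of seq$^\to$. Beyond this careful bookkeeping, the proof is routine syntactic casework.
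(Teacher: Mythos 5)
Your proposal is correct and is exactly the intended argument: the paper states \cref{thm:determ} without proof, and the expected justification is precisely this structural induction, with the non-sequential cases discharged by the mutual exclusivity of the side conditions (defined/undefined for assignments, the trichotomies on $\sem{e}(\sigma)$ and $\sem{b}(\sigma)$ for the rest) and the sequential case resolved by the explicit guard ${\prog p'} \neq \stp$, ${\prog p'} \neq \skp$ on \textbf{(seq$^\to$)} together with the fact that $\err$ carries no residual configuration. One small bookkeeping remark: in the case ${\prog p} \scomp {\prog q}$ you invoke the induction hypothesis as ``${\prog p}$ has at most one reduct,'' which is slightly stronger than the literal statement ``at most one applicable rule''; the two coincide because every rule determines its conclusion uniquely from the configuration and its (inductively unique) premise derivation, but the induction is cleanest if you prove the functional form of the statement outright.
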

Let $\ssto[\star]$ be the transitive closure of the small-step relation $\ssto$
that was previously presented. Intuitively $\ssto[\star]$ represents an
evaluation of one or more steps according to the small-step semantics. If
${\prog p}, \sigma, t \ssto[\star]  v$ we call $v$ `non-terminal' whenever it
is of the form ${\prog p'}, \sigma', t'$ for some hybrid program ${\prog p'}$,
environment $\sigma'$, and time instant $t'$; we call $v$ `terminal' otherwise.

\begin{corollary}[Determinism]
   \label{cor:det}
  Consider a program ${\prog p}$, an environment $\sigma$, and a time instant
  $t$. If ${\prog p} \sep \sigma \sep t \ssto[\star] v$ and ${\prog p} \sep
  \sigma \sep t \ssto[\star] u$ with both $v$ and $u$ terminal then we have $v
  = u$. 
\end{corollary}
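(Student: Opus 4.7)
The plan is to derive \cref{cor:det} from \cref{thm:determ} by establishing two auxiliary facts: first, that $\ssto$ is a partial function on configurations, and second, that terminal outcomes are irreducible. The first is immediate from \cref{thm:determ}: at most one rule is applicable at any configuration $({\prog p}, \sigma, t)$, and each rule's conclusion pins down a unique successor from its antecedents, so there is at most one $v$ with $({\prog p}, \sigma, t) \ssto v$. The second fact follows from a quick inspection of \cref{small_step}: no rule has $\err$ on its left-hand side, nor a triple whose first component is $\skp$ or $\stp$, simply because $\skp$ and $\stp$ are not hybrid programs according to the grammar of \cref{sec:modelling}. Hence no reduction step can extend a terminal outcome.

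With functionality of $\ssto$ and irreducibility of terminals in hand, I would proceed by induction on the length $n \geq 1$ of the first derivation $({\prog p}, \sigma, t) \ssto[\star] v$. In the base case $n = 1$ the step $({\prog p}, \sigma, t) \ssto v$ is forced by functionality; the second derivation $({\prog p}, \sigma, t) \ssto[\star] u$ must begin with this same step, and since $v$ is terminal no further reduction is available to either derivation, so $u = v$. For the inductive step, both derivations must share their unique first step $({\prog p}, \sigma, t) \ssto c$; moreover $c$ cannot itself be terminal, since otherwise the first derivation could not contain the remaining $n - 1 \geq 1$ steps. Consequently the two tails $c \ssto[\star] v$ and $c \ssto[\star] u$ are well-formed derivations to terminal states, and the induction hypothesis applied to the shorter tail gives $v = u$.

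The only real subtlety is choosing to induct on derivation length rather than on program structure: structural induction is not available because small-step reductions can enlarge the program syntactically, for instance the rule \textbf{(wh-true$^\to$)} replaces a while-loop by its one-step unfolding. Once the induction is framed on derivation length, the argument is entirely routine, so the bulk of the work really sits inside \cref{thm:determ}, whose proof in turn amounts to checking that the side-conditions of the rules in \cref{small_step} partition the space of configurations.
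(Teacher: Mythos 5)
Your proposal is correct and follows exactly the route the paper takes: induction on the number of reduction steps combined with Theorem~\ref{thm:determ}. The paper's proof is a one-liner stating precisely this, so your added observations (functionality of $\ssto$ and irreducibility of terminal outcomes) are just the details the paper leaves implicit, worked out correctly.
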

\begin{proof}
  Follows by induction on the number of reduction steps and
  Theorem\,\ref{thm:determ}.\end{proof}
Next we will show that the small-step semantics and its big-step counterpart
are indeed equivalent.  We will use the two following results for this effect.

\begin{lemma}\label{lem:progress}
Given a program ${\prog p}$, an environment $\sigma$ and a time instant ${t}$
\begin{enumerate}
  \item if\/ ${\prog p}\sep\sigma\sep t\ssto {\prog p'}\sep\sigma'\sep t'$ and 
${\prog p}'\sep\sigma'\sep t'\bsto\skp\sep\sigma''$ then 
${{\prog p}\sep\sigma\sep t\bsto\skp\sep\sigma''}$;
  \item if\/ ${\prog p}\sep\sigma\sep t\ssto {\prog p'}\sep\sigma'\sep t'$ and 
${\prog p}'\sep\sigma'\sep t'\bsto\stp\sep\sigma''$ then 
${{\prog p}\sep\sigma\sep t\bsto\stp\sep\sigma''}$;
\item if\/ ${\prog p}\sep\sigma\sep t\ssto {\prog p'}, \sigma' ,t'$ and 
${\prog p}'\sep\sigma'\sep t'\bsto \err$ then 
${{\prog p}\sep\sigma\sep t\bsto \err}$;
\end{enumerate} 
\end{lemma}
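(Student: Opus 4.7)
The plan is to prove the three items together by induction on the derivation of the one-step reduction ${\prog p}\sep\sigma\sep t \ssto {\prog p'}\sep\sigma'\sep t'$, and then to read off item $i$ by restricting the terminal marker $\chi\in\{\skp,\stp,\err\}$. A case split on the final rule of the small-step derivation drives the argument, and two broad families of cases appear.

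The first family consists of the discrete-control rules \textbf{(if-true}$^\to$\textbf{)}, \textbf{(if-false}$^\to$\textbf{)}, and \textbf{(wh-true}$^\to$\textbf{)}. In all of these, the reduction preserves both the environment and the time budget and simply exposes the appropriate subterm; the conclusion follows by a direct application of the matching big-step rule \textbf{(if-true)}, \textbf{(if-false)}, or \textbf{(wh-true)} to the assumed big-step fact about ${\prog p'}$. No induction hypothesis is needed here.

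The second family concerns sequential composition. For \textbf{(seq-skip}$^\to$\textbf{)}, \textbf{(seq-stop}$^\to$\textbf{)}, and \textbf{(seq-err}$^\to$\textbf{)} the reduction ${\prog r}\scomp{\prog q}\sep\sigma\sep t \ssto \cdot \sep\sigma'\sep t'$ is triggered by a sub-step ${\prog r}\sep\sigma\sep t \ssto \chi\sep\sigma'\sep t'$. Inspecting the rules that can directly emit a terminal marker — namely \textbf{(asg}$^\to$\textbf{)}, \textbf{(diff-skip}$^\to$\textbf{)}, and \textbf{(wh-false}$^\to$\textbf{)} for $\skp$; \textbf{(diff-stop}$^\to$\textbf{)} for $\stp$; and the \textbf{-err} rules for $\err$ — reveals that each matches a corresponding base big-step rule and that the time actually consumed is exactly $t-t'$, so that in each case ${\prog r}\sep\sigma\sep t-t' \bsto \chi\sep\sigma'$ (resp.\ $\err$). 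Pairing this with the given big-step fact about ${\prog q}$ and applying \textbf{(seq-skip)}, \textbf{(seq-stop)}, or \textbf{(seq-err)} at the top level closes these cases.

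The delicate case is \textbf{(seq}$^\to$\textbf{)}: the reduction ${\prog r}\scomp{\prog q}\sep\sigma\sep t \ssto {\prog r'}\scomp{\prog q}\sep\sigma'\sep t'$ arises from a non-terminal sub-step ${\prog r}\sep\sigma\sep t \ssto {\prog r'}\sep\sigma'\sep t'$. For item~1, inverting the hypothesis ${\prog r'}\scomp{\prog q}\sep\sigma'\sep t' \bsto \skp\sep\sigma''$ through the only applicable rule \textbf{(seq-skip)} yields a split $u_1+u_2 = t'$ with ${\prog r'}\sep\sigma'\sep u_1 \bsto \skp\sep\tau$ and ${\prog q}\sep\tau\sep u_2 \bsto \skp\sep\sigma''$. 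The induction hypothesis cannot be invoked verbatim, since $u_1$ may be strictly smaller than $t'$; this is the main obstacle, and I would address it by strengthening the inductive statement so that the big-step residual time is permitted to be any $u_1\leq t'$, with conclusion ${\prog r}\sep\sigma\sep u_1+(t-t') \bsto \skp\sep\tau$. Granted the strengthening, a final use of big-step \textbf{(seq-skip)} with $u_2$ reassembles ${\prog r}\scomp{\prog q}\sep\sigma\sep t \bsto \skp\sep\sigma''$. The analogous subcases of \textbf{(seq}$^\to$\textbf{)} for items~2 and~3 are handled by inverting \textbf{(seq-stop)} and \textbf{(seq-err)} respectively; the strengthened statement simultaneously covers all three since its only role is to decouple the residual big-step duration from the remaining small-step budget.
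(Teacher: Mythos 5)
Your proposal follows the same route as the paper: induction on the derivation of the small-step reduction, with the discrete-control rules discharged by the matching big-step rule and the sequencing rules by the corresponding big-step sequencing rule. You correctly isolate the one genuine subtlety --- in the \textbf{(seq$^\to$)} case the inverted big-step premise for ${\prog r'}$ lives at a residual time $u_1$ that need not equal $t'$ --- and your strengthening of the induction hypothesis (allowing an arbitrary residual big-step time $u_1$, with conclusion at time $u_1+(t-t')$) resolves it; this is precisely the content that the paper's one-line proof leaves implicit. One small omission: when inverting ${\prog r'}\scomp{\prog q}\sep\sigma'\sep t'\bsto\stp\sep\sigma''$ (resp.\ $\err$) in items~2 and~3, the derivation may also end with \textbf{(seq-skip)} --- i.e.\ ${\prog r'}$ terminates with $\skp$ and ${\prog q}$ then stops (resp.\ fails) --- and not only with \textbf{(seq-stop)} (resp.\ \textbf{(seq-err)}); these extra subcases are handled by the same strengthened item-1 hypothesis you already set up, so no new idea is required.
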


\begin{proof}
  Follows by induction over the rules concerning the small-step relation. 
\end{proof}

\begin{proposition}\label{prop:shift}
  For all program {\prog p} and {\prog q}, environments $\sigma$
  and $\sigma'$, and time instants $t$, $t'$ and $s$, if\/
  ${\prog p}\sep\sigma\sep t \ssto {\prog q}\sep\sigma'\sep t'$ then\/
  ${\prog p}\sep\sigma\sep t + s \ssto {\prog q}\sep\sigma'\sep t' + s$;
  and if\/ ${\prog p}\sep\sigma\sep t \ssto \skp\sep\sigma'\sep t'$
  then\/
  ${\prog p}\sep\sigma\sep t+s \ssto \skp\sep\sigma'\sep {t'+s}$.
  If ${\prog p}\sep \sigma \sep t \ssto \err$ then
        ${\prog p}\sep \sigma \sep t +s \ssto \err$ 
\end{proposition}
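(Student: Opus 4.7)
The plan is to prove the three clauses simultaneously by structural induction on the derivation of the small-step transition $\ssto$ assumed in the hypothesis. Every rule in \cref{small_step} is either axiomatic or has a single small-step premise, so a single induction handles all cases provided the hypothesis is invoked in the clause matching the shape of that premise (a program, $\skp$, or $\err$).

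First I would dispatch the axiomatic rules. The rules \textbf{(asg$^\to$)}, \textbf{(asg-err$^\to$)}, the three \textbf{if}-axioms, the three \textbf{while}-axioms, and \textbf{(diff-err$^\to$)} either leave the time component $t$ untouched in their conclusion or reduce to $\err$, so the shift $t \mapsto t+s$ is transported verbatim. The only non-trivial axiom is \textbf{(diff-skip$^\to$)}: its side condition $\sem{e}(\sigma) \leq t$ is preserved when $t$ is replaced by $t+s$ (taking $s \geq 0$, which is the intended reading of the statement), and the output time $t - \sem{e}(\sigma)$ becomes $(t+s) - \sem{e}(\sigma) = t' + s$, exactly as required.

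For the inductive rules \textbf{(seq$^\to$)}, \textbf{(seq-skip$^\to$)} and \textbf{(seq-err$^\to$)}, I would apply the induction hypothesis to their small-step premise — picking the first, second or third clause of the proposition according to the shape of the intermediate reduct — and then reapply the same sequential rule to recover the shifted conclusion. Environments and program components pass through verbatim, so only the arithmetic of the time argument needs to be tracked.

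The main subtlety, and the reason the proposition contains no clause for $\stp$, lies in \textbf{(diff-stop$^\to$)} and \textbf{(seq-stop$^\to$)}: their output time is pinned to $0$ rather than $t'+s$, and the premise $\sem{e}(\sigma) > t$ of \textbf{(diff-stop$^\to$)} is not preserved under arbitrary shifts. Hence no induction step for $\stp$-transitions is needed — and indeed none would be provable — so the delicate part of the plan is to recognise this asymmetry and check that the three stated clauses are precisely what the induction can support.
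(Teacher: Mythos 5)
Your proposal is correct and follows exactly the route the paper takes, namely induction over the rules defining the small-step relation together with the arithmetic of addition; the paper's own proof is a one-line appeal to this induction, whereas you additionally spell out the case analysis and correctly identify why no $\stp$-clause appears (the premise of \textbf{(diff-stop$^\to$)} is not shift-invariant and its output time is pinned to $0$).
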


\begin{proof}
        Follows straightforwardly by induction over the rules concerning the
        small-step relation and the algebraic properties of addition.
\end{proof}

\begin{theorem}[Equivalence]
\label{thm:osem-eq}
The small-step semantics and the big-step semantics are related in the
following manner. Given a program $\prog{p}$, an environment $\sigma$ and a
time instant $t$
\begin{enumerate}
  \item ${\prog p}\sep\sigma\sep t\bsto\skp\sep\sigma'$ iff\/ 
${\prog p}\sep\sigma\sep t\ssto[\star]\skp\sep\sigma'\sep 0$;
  \item ${\prog p}\sep\sigma\sep t\bsto\stp\sep\sigma'$ iff\/ 
${\prog p}\sep\sigma\sep t\ssto[\star]\stp\sep\sigma'\sep 0$;
\item ${\prog p}\sep\sigma\sep t\bsto \err$ iff\/ 
${\prog p}\sep\sigma\sep t\ssto[\star]\err $.
\end{enumerate} 
\end{theorem}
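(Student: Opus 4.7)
The natural plan is to prove each of the three biconditionals by induction, treating the two directions separately.

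For the forward direction ($\bsto$ implies $\ssto[\star]$), I would proceed by induction on the derivation of ${\prog p}\sep\sigma\sep t\bsto v$ and perform a case analysis on the last rule applied. The atomic cases \textbf{(diff-skip)}, \textbf{(diff-stop)}, \textbf{(diff-err)}, \textbf{(asg-skip)}, and \textbf{(asg-err)} are essentially immediate: each corresponds to a single small-step rule with the same side condition, producing a terminal configuration directly (noting that the residual time in \textbf{(asg-skip)} is $0$ and in \textbf{(diff-stop)} is also $0$, matching the right-hand sides of clauses (1)--(2)). The rules \textbf{(if-true)}, \textbf{(if-false)}, \textbf{(wh-true)}, and \textbf{(wh-false)} reduce in one small step to a configuration for which the inductive hypothesis directly applies. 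The \textbf{(seq-stop)} and \textbf{(seq-err)} cases are handled by the corresponding small-step sequencing rules together with the IH. The genuinely delicate case is \textbf{(seq-skip)}: from ${\prog p}\sep\sigma\sep t\bsto\skp\sep\tau$ and ${\prog q}\sep\tau\sep u\bsto v$ we get via IH that ${\prog p}\sep\sigma\sep t\ssto[\star]\skp\sep\tau\sep 0$; here we must invoke Proposition~\ref{prop:shift} to lift this reduction to ${\prog p}\sep\sigma\sep t+u\ssto[\star]\skp\sep\tau\sep u$, after which one application of \textbf{(seq-skip$^\to$)} yields ${\prog q}\sep\tau\sep u$, to which the IH for the second premise applies.

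For the backward direction ($\ssto[\star]$ implies $\bsto$), I would induct on the number $n$ of small-step transitions in ${\prog p}\sep\sigma\sep t\ssto[\star]v$. The base case $n=1$ only arises for atomic programs or for one-step reductions of conditionals/loops to terminal configurations, and is handled by direct inspection of the small-step rules, each of which has a matching big-step rule. For the inductive step one writes ${\prog p}\sep\sigma\sep t\ssto{\prog p'}\sep\sigma'\sep t'\ssto[\star]v$, applies the IH to the $n-1$-step tail to obtain a big-step derivation for ${\prog p'}\sep\sigma'\sep t'\bsto v$, and then invokes Lemma~\ref{lem:progress} (clauses 1--3, depending on whether $v$ is $\skp\sep\sigma''$, $\stp\sep\sigma''$, or $\err$) to conclude ${\prog p}\sep\sigma\sep t\bsto v$.

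The main obstacle I anticipate is keeping the time components aligned in the seq-skip case of the forward direction: the big-step rule accumulates durations via $t+u$, while the small-step rules decrement the time budget and eventually hit~$0$. Proposition~\ref{prop:shift} is tailor-made for this, so the argument reduces to instantiating it with $s=u$. A secondary care point is the \textbf{(seq$^\to$)} rule, whose side condition ${\prog p'}\neq\stp,\skp$ must be checked to avoid overlap with \textbf{(seq-stop$^\to$)} and \textbf{(seq-skip$^\to$)}; this is exactly what makes Theorem~\ref{thm:determ} available, and it guarantees that the inductive construction of the small-step sequence is unambiguous and that the combination via Lemma~\ref{lem:progress} is legitimate.
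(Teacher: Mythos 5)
Your proposal is correct and follows essentially the same route as the paper: the left-to-right direction by induction on the big-step derivation using Proposition~\ref{prop:shift} (applied transition-by-transition to lift the whole reduction sequence in the \textbf{(seq-skip)} case), and the right-to-left direction by induction on the length of the small-step reduction sequence using Lemma~\ref{lem:progress}. The additional detail you supply on the time-budget bookkeeping and on the side condition of \textbf{(seq$^\to$)} is consistent with the paper's argument.
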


\begin{proof}
  The right-to-left direction is obtained by induction over the length of the
  small-step reduction sequence using \Cref{lem:progress}.  The left-to-right
  direction follows by induction over the big-step derivations together with
  Proposition~\ref{prop:shift}.
\end{proof}

\section{An Improved Simulator for Hybrid Programs}
\label{sec:interpreter}

This section summarises several improvements made to Lince's simulator of
hybrid programs since its original
publication~\cite{goncharov2020implementing}. These include (1) more expressive
assignments and durations in differential statements (by virtue of the results in the
preceding section); (2) a more user-friendly program syntax (by means of
syntactic sugar); (3) more informative error messages; and (4) a numerical
solver of systems of ordinary differential equations.  In order to render our
summary more lively we complement it with a running example involving an RLC
circuit in series with an On-Off source. It is designed to stabilise voltage
across the capacitor in the circuit at a specific value.

\smallskip
\noindent
\textbf{Running example: RLC circuits and harmonic oscillation.}
We present in~\cref{fig:dho-lince} the simulation of an \textit{RLC circuit in
series} (RLCS). This simulation models an electric system composed of a
resistor, a capacitor, an inductor, and a power source connected in series. The
power source strategically switches on and off, as a way to stabilise voltage
across the capacitor at a target value (say, $10V$). Such systems are known to
yield interesting results that are practically relevant for energy storage
voltage control systems, which help to mitigate voltage imbalances that could
otherwise damage electronic equipment. More details about such circuits and
associated differential equations are available for example
in~\cite{en14040832,hasan2019application}.  We present in~\cref{fig:dho-lince}
two variations of an RLCS circuit: one in which the capacitor voltage is in an
underdamped regime  -- with a resistance $\texttt{\small rU}$ of $0.5 \Omega$,
a capacitance $\texttt{\small c}$ of $0.047 F$,  and an inductance
$\texttt{\small l}$ of $0.047H$ -- and one in which the capacitor voltage is in
an overdamped regime -- with a resistance $\texttt{\small rO}$ of $4\Omega$ and
the same values as before for the capacitance and inductance.  The general idea
of our program is that the associated controller will read the voltage across
the capacitor (variable $\prog{under}$ for the underdamped case, $\prog{over}$
for the overdamped one) every 0.01 seconds, and set the voltage at the
source either to $0$ (off) or $18V$ (on) depending on the value read.
\begin{figure}[htb!]
    \centering
    \begin{minipage}{0.42\textwidth}
\begin{lstlisting}[style=lince]
under:=0; dU:=0; vU:=0; rU:=0.5;
over:=0;  dO:=0; vO:=0; rO:=4;
c:=0.047; l:=0.047;

while true do {
  if (under<10) then vU:=18;
                else vU:=0; 
  if (over<10)  then vO:=18;
                else vO:=0; 
  under'=dU, over'=dO, 
  dU'=-(dU*rU/l)
      -under/(l*c)+vU/(l*c),
  dO'=-(dO*rO/l)
      -over/(l*c)+vO/(l*c)
  for 0.01;
}
\end{lstlisting}
  \end{minipage}
  ~~~
  \wrap{\includegraphics[width=0.54\textwidth]{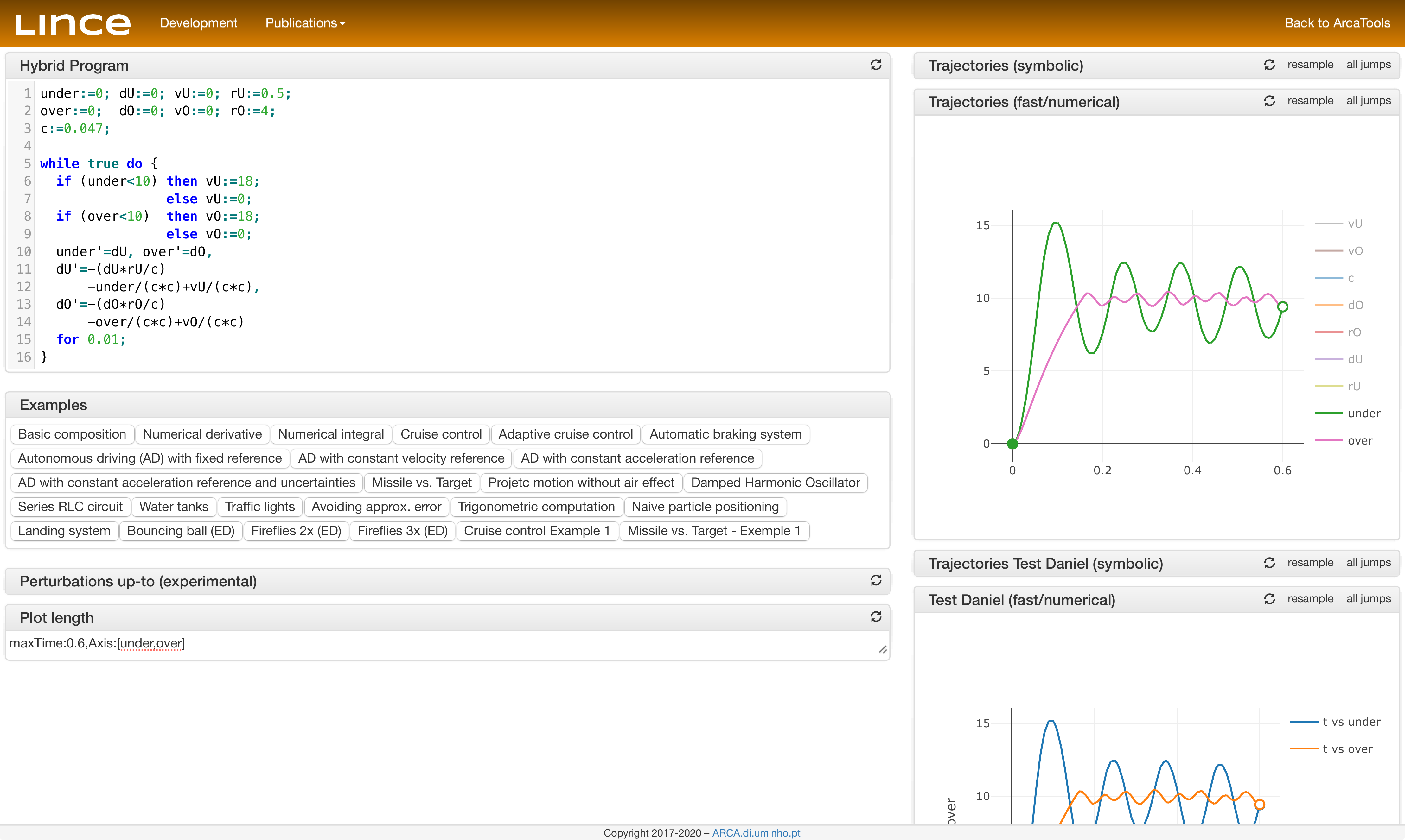}}
  \caption{Hybrid program (left) and its plot (right) of two variations
  of an RLC circuit that tries to maintain the voltage in the 
capacitor at $10V$.}
  \label{fig:dho-lince}
\end{figure}

\smallskip
\noindent
\textbf{Improvement's summary.}
The program just described is highly problematic for the original version of
Lince. This is due to two fundamental reasons related to the ODEs involved:
specifically (1) the equations used in the ODEs violate the linearity condition
presented in~\cref{sec:modelling} (they include variable multiplications); and
(2) the original solver of ODEs, mentioned in the introduction, fails to
produce solutions after few iterations, due to the sheer, exponential growth of
the involved expressions' size. We detail these issues and others next.

\smallskip
\noindent
\emph{Richer expressions.}
As illustrated in the introduction and in the previous RLCS example, there are
several essential, non-linear operations that are necessary to accomodate if
one wishes to employ Lince in the analysis of diverse, common hybrid scenarios.
We therefore now permit non-linear expressions outside of ODEs, essentially by
using as basis the grammar of hybrid programs that was described
in~\cref{sec:modelling}. Thus expressions outside the ODEs can now include for
example the operations: division and multiplication of variables,
more complex mathematical functions (such as square root extraction,
exponentials, logarithms, minimum/maximum, and (co)sine), and mathematical
constants (namely pi and Euler's constant).

As for expressions inside ODEs, the linearity constraint is kept but the
associated parser is much less rigid. A core feature is that it now tries to
convert non-linear expressions into equivalent linear ones via algebraic laws.
For example, it converts the expression $x \cdot 5$, which syntactically is not a linear expression, into the linear
one $5 \cdot x$ since multiplication is commutative. Most notably, it converts
non-linear expressions $x \cdot y$ into scalar multiplications $s \cdot x$ or $s \cdot y$
if it can infer that either $x$ or $y$ is a constant with value $s$. Such a feature is
critical in our RLCS~example, where we multiply variables in the
respective ODEs.

\smallskip
\noindent
\emph{More informative error messages.}
Several errors were undetected at an early stage of the simulation process,
which resulted in unintelligible error messages in many situations. We thus
added and improved the detection and notification of several key errors
occurring in typical usages of Lince, including when:
 (1) a partial function fails (such as in division by 0);
 (2) a variable is not properly initialised;
 (3) the number of arguments of a function is incorrect;
 (4) the solver fails to solve a system of ODEs; and
 (5) ODEs contain non-linear expressions after de-sugaring.
For example, in our RLCS simulation when defining \code{c} to be 0  we now
obtain the error ``\emph{Error: the divisor of the division 'rU/(c)' is
zero.}''.  In our experience, this more precise detection and notification of
errors drastically improved user experience.

\smallskip
\noindent
\emph{Numerical solver.}
As already mentioned, several hybrid programs such as our RLCS example cannot
be properly handled by the (exact) solver of ODEs (\emph{viz.}
SageMath~\cite{sage}) used by Lince. We have therefore implemented an
alternative, numerical solver based on the popular fourth-order Runge-Kutta
method.  At the theoretical level, this only required a
small adaptation of the operational semantics presented
in~\cref{sec:modelling}. Specifically we no longer assume that the solution
$\phi^{\vec{x}' = \vec{\ell}}_\sigma$ associated to a system of ODEs $\vec{x}'
= \vec{\ell}$ and an initial condition $\sigma$ is exact. At the
practical level, this allowed us to keep the size of expressions involved
in computations highly manageable thus allowing Lince to cover a broader range of
examples such as the RLCS.

\section{An Improved Visualiser for Hybrid Programs}
\label{sec:visualiser}

Many hybrid programs cannot be easily understood by simply plotting values of
variables over time. For example, in some cases one may wish to analyse the
movement of a vehicle in a 2D plane, or to analyse how its behaviour
varies due to changes in its initial position and velocity.  This section
presents an extension of Lince's visualisation capabilities in these two
directions. In the same spirit of the preceding section, we complement our
description with a running example.

\smallskip
\noindent
\textbf{Running example: avoiding and manoeuvring around obstacles.}
The \emph{Automatic Emergency Braking} (AEB) system is an autonomous driving
device that after reading its distance to an obstacle and its current velocity,
decides whether to decelerate until stopping~\cite{AEB_resume}. Here we present
a more advanced version of the AEB that after stopping also manoeuvres around
the obstacle -- clearly a process involving two or even three spatial
dimensions.  Such a system is called \emph{Automatic Emergency Braking with an
Overtaking Manoeuvre} (AEBOM).

The continuous dynamics of the AEBOM (\emph{i.e.} the differential equations
involved) is typically given by Dubins dynamics which essentially describe the
object's orientation over time (an angle) and its effect on the object's
velocity along the different spatial dimensions~\cite{Platzer3}. We adopt this
approach as well. For simplicity we additionally assume that our object is a
robot that is able to rotate around itself. The overall process of our AEBOM is
thus as follows: move forward until detecting the obstacle and in which case
decelerate until stopping; then rotate to the left and move forward a
prescribed number of meters (that depends on the obstacle's size); then rotate
right and move forward again a prescribed number of meters; and finally repeat
the last step.

\Cref{fig:aebom-vis} depicts the original visualisation of the AEBOM simulation
on the left, and a customised 2D visualisation that uses our extension on the
right. The respective implementation of the AEBOM, included in Lince online,
is not relevant to show at this stage, because our focus is at the moment on
describing new visualisation mechanisms and not features concerning code.
Observe as well that the plot on the right provides novel insights with respect
to the one on the left: whilst in the right it is clear that the robot does
not collide with the obstacle and performs the overtaking manoeuvre safely, in
the left it is much harder to see that the same occurs. We provide more
details about our improved plotting system next.
\begin{figure}[htb!]
    \centering
    \includegraphics[height=40mm]{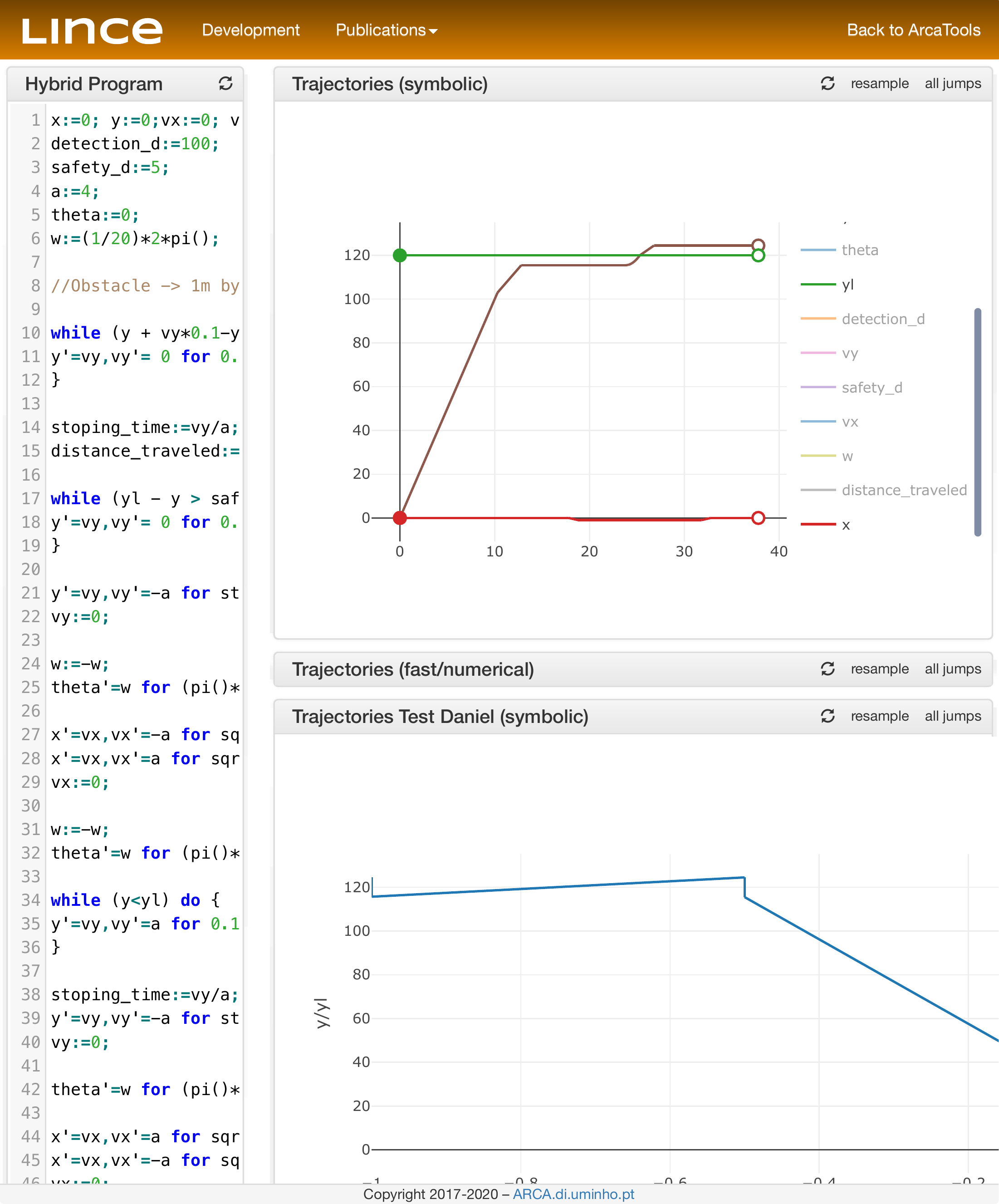}~~~~~~~~    \includegraphics[height=40mm]{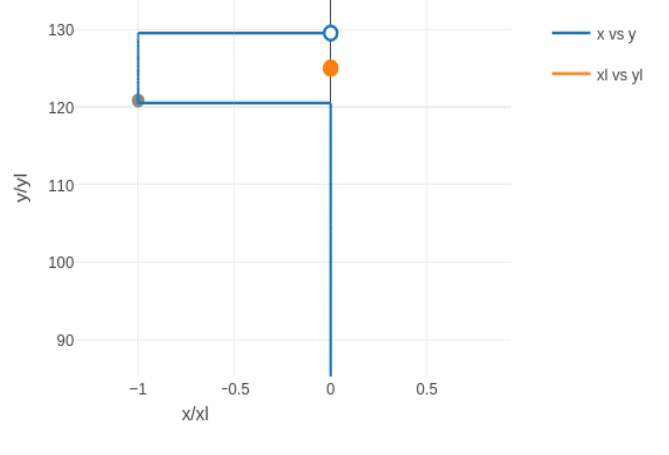}
    \caption{Plot of AEBOM using the traditional plotting system in Lince
    (left), and a new customised 2D plot (right) relating \code{x} with \code{y} (the
        robot's coordinates) and \code{xl} with \code{yl} (the
        obstacle's coordinates).}
        \label{fig:aebom-vis}
\end{figure}

\smallskip
\noindent
\textbf{Higher-dimensional trajectories and beyond.}
Our new visualisation framework in Lince uses the Plotly JavaScript library to
display plots\footnote{\url{http://plotly.com/}}. Among other things, we now
support 2D and 3D scatter plots, and include dedicated markers such as the
large circles indicating the start and end points of trajectories. When hovering over these markers, extra information
is displayed, \emph{e.g.} the respective values, relevant information about the
conditionals involved, and potential warnings. We also exploit the animation
functionality of Plotly in plots that do not include the time component, by
moving a highlighting circle through the trajectories capturing how values vary
throughout time. This feature is active by default. To take all these
possiblities into account, Lince allows the user to adjust different settings
of the plot under analysis so that she can obtain the best possible
configuration for her needs. We very briefly detail such settings next: 
\begin{itemize}
\item \emph{Axis}: Allows defining the relationships between variables which
        will automatically be presented in the respective plots. For example,
        by setting \code{[x, y, v]}, if the graph type is scatter, three
        separate graphs will be generated where the vertical axis represents
        each of the variables \code{x}, \code{y}, and \code{v}, while the
        horizontal axis represents time.  Choosing which variables to map to
        the axes is crucial for proper data analysis, allowing direct visual
        comparisons between different variables over time or with each other.
\item \emph{Max Time}: Refers to the duration of the
        simulation.
\item \emph{Max Iterations}: Specifies the maximum number of iterations (in
        while-loops) that the simulation can perform.
\item \emph{Graph Type}: Defines the type of graph to be used for visualising
        the simulation data, by selecting from the available types (`scatter'
        or `scatter3d'). In a nutshell, a scatter plot is a 2D graph used to
        display the relationship between two variables, with data points
        plotted in the two-dimensional plane. Scatter3D serves the same purpose
        but involves three variables, with data points plotted in the
        three-dimensional space.
\end{itemize}

The summarised settings are presented in~\cref{fig:configuration}, where the
values there listed are the ones used to obtain the plot in
~\cref{fig:aebom-vis} on the right.
\begin{figure}[htb!]
    \centering
    \includegraphics[width=0.29\textwidth]{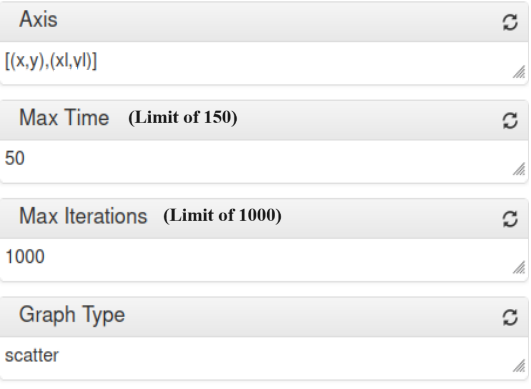}
    \caption{Input boxes that allow for the configuration of the visualisation.}
    \label{fig:configuration}
\end{figure}

\smallskip
\noindent
\textbf{Variability of initial conditions.}
As mentioned before, it is highly relevant take into account how the behaviour
of a hybrid program varies due to changes in its initial conditions. In the
AEBOM previously described in particular, it is of fundamental importance to
understand how the robot manoeuvres around an obstacle with respect to different
initial positions and velocities -- for it is unrealistic to expect that it
moves with well-known, exact conditions. A similar, more general discussion
can be consulted in~\cite{Platzer3}.

In order to address this aspect we extended Lince in two steps: first its
syntax now allows the listing of different initial conditions at the same time.
Such is illustrated in \cref{fig:aebom-multiple} on the left, with a snippet of
code used to specify initial values with respect to our robot in the AEBOM example.
The latter's initial position (\code{x},\code{y}) for example, can now be
either $(0,0)$, $(2,0)$, or $(4,0)$;  and similarly we have different
initial velocities (\code{vy}) towards the obstacle, $4$, $8$, and $12\,\sfrac{m}{s}$. Second Lince now pre-processes such listings in the code and
derives all possible combinations of initial conditions, which of course yields
several hybrid programs at once (in the standard syntax). These data is then
fed into Lince's visualiser which presents multiple simulations overlapped in
the same plot. Such is seen in~\cref{fig:aebom-multiple} on the right, again
with our AEBOM example, where we see that our robot behaves in the 
same way under different initial conditions.
\begin{figure}[htb!]
    \centering
    \begin{minipage}{0.25\textwidth}
\begin{lstlisting}[style=lince,mathescape=true]
//Initial Conditions
x :=[0,2,4];
y := 0;
vx:= 0;
vy:=[4,8,12];
xl:=[0,2,4];
yl:=[120,135,150];

...
\end{lstlisting}~
    \end{minipage}
    \wrap{\includegraphics[width=0.74\textwidth]{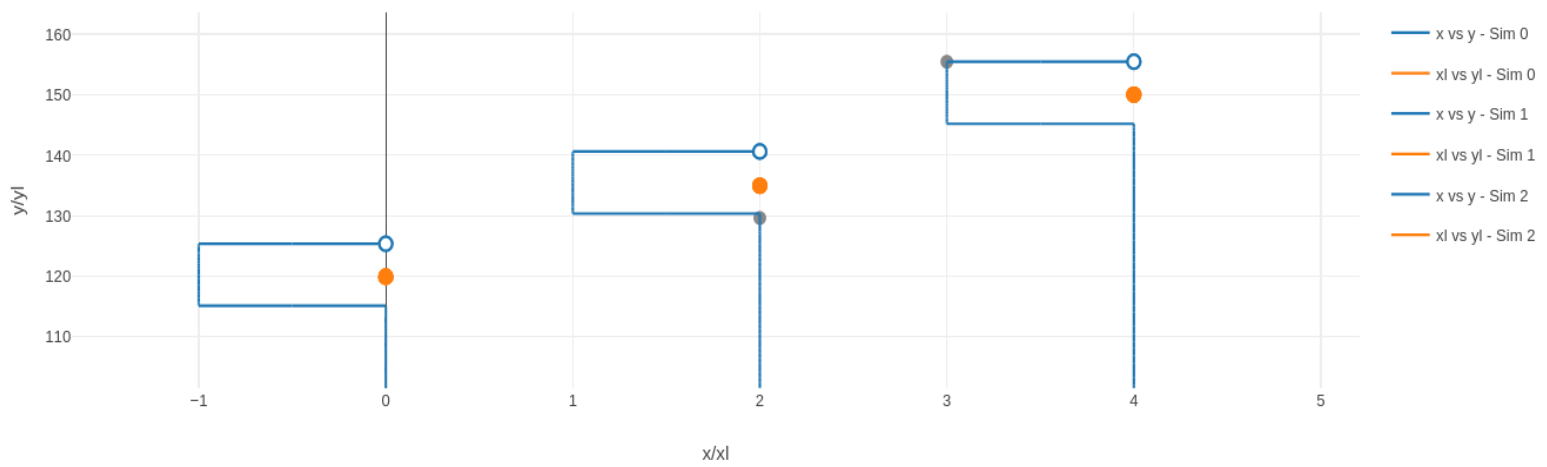}}
    \caption{Visualisation of multiple simulations overlapped concerning
            the AEBOM.
     }
    \label{fig:aebom-multiple}
\end{figure}

\section{Lince at Work: a Showcase of the Overall List of Improvements}
\label{sec:pursuit}

This section illustrates the overall list of improvements made to Lince (as
described in the preceding sections) working together in the design and
analysis of a complex hybrid scenario -- specifically we focus on a
multi-dimensional pursuit game between two players (for example two
drones)~\cite{manna93,anderson93,chaochen93,krilavicius05}.  Our illustration
focuses mainly on two aspects:
(1) Lince's capability tosimulate such scenarios,
with optimally configured 3D plotting systems; 
and
(2) the time that Lince takes to simulate increasingly larger systems,
to provide insights over limitations of the current implementation.

\smallskip
\noindent
\textbf{Pursuit Games.}
Pursuit games are a captivating class of problems involving multiple agents,
where at least one them (the pursuer) aims to capture or reach another (the
evader)~\cite{manna93,anderson93,chaochen93,krilavicius05,Platzer3}. Such games
are extensively studied across various disciplines, including mathematics, game
theory, robotics, and computer science, due to their practical and theoretical
significance. Indeed they model a wide range of real-world situations, from
military and security operations to animal behaviour and industrial
applications. 

In this section we explore a specific 3D pursuit game, where we perceive the
pursuer as a drone that attempts to capture another one in the
three-dimensional space.  This scenario is particularly challenging, due to the
additional complexity introduced by the third dimension which requires a higher
level of planning and coordination between the drones' movements. In order to
model this problem we base our game's continuous dynamics on Dubins
dynamics~\cite{Platzer3}, \emph{i.e.} as in~\cref{sec:visualiser} but now in
three dimensions. 

Our overarching strategy for the pursuer is to simply point its orientation to
the evader's position at every iteration in a certain while-loop. Of course
there are other options, such as that of (variations of) \emph{Dubins
paths}~\cite{Platzer3,351019}, but our version already suffices to properly
illustrate Lince at work. Technically our approach utilises the angular
velocity tensor to perform 3D infinitesimal rotations \cite{lehman}.
Additionally we use the cross product between the projection of the relative
velocity vector and the relative position vector in each plane to determine the
orientation of rotation among the three axes. We do not show here the coding
details of all these processes, since this is unnecessary for our illustration.
However the interested reader can consult details about these
in~\cite{351019,lehman}, and the complete code of our program is
included in the examples available in Lince online.

We now show the simulation of our game in Lince across different scenarios. In
the first case, the pursuer starts from the position \code{(300,300,600)} with
a velocity of \code{(-20,-10,0)}$\sfrac{m}{s}$, while the evader begins at the
position \code{(600,600,500)} with a velocity of
\code{(10,0,10)}$\sfrac{m}{s}$. The pursuer's angular velocity along each axis
is \code{(1/20)*2*pi()}$\sfrac{rad}{s}$ (20 seconds to complete a full
rotation); and for the evader \code{(1/40)*2*pi()}$\sfrac{rad}{s}$ (40 seconds
to complete a full rotation). The pursuer is allowed to actuate every 0.1s, and
it wins the game if it reaches a distance of less than one meter with respect to the
evader. Finally, for simplicity we assume a pre-defined set of movements for
the latter player.  Using these parameters, we simulated the corresponding
program in Lince and generated a 3D scatter plot of the positional variables
for both the pursuer and the evader, resulting in the graphical representation
shown in the \cref{pursuitGameSim1} after 73 seconds.
\begin{figure}[htb!]
    \centering    \wrap{\includegraphics[width=0.35\textwidth]{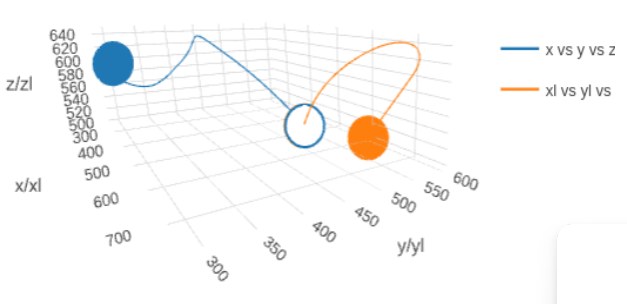}}
    \wrap{\includegraphics[width=0.35\textwidth]{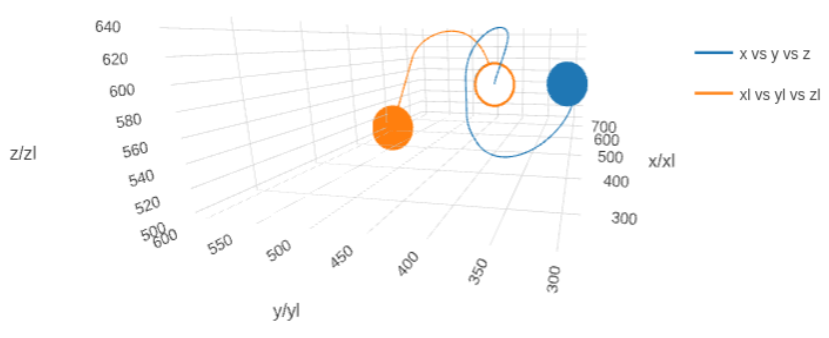}}        
    \caption{Two views of the same plot, where a pursuer (blue) captures an evader (orange).
     }
    \label{pursuitGameSim1}
\end{figure}

We can see that the decision strategy for the pursuer adopted in this hybrid
program successfully guided it to the evader, resulting in a capture at the
position \code{(691.26,441.92,561.12)} after 27.7 seconds.  However if we
change the initial velocity of the evader to a higher value, such as
\code{(20,0,9)}$\sfrac{m}{s}$, we no longer can visualise the capture of the evader
within the limits used for this simulation (\cref{pursuitGameSim2}). Indeed, Lince supports the customisation of bounds both on the maximal time and on the number of
times loops are unfolded, to avoid infinite computations. In this case, using larger bounds would allow the pursuer to capture the evader in the plot.
\begin{figure}[htb!]
    \centering    \wrap{\includegraphics[width=0.30\textwidth]{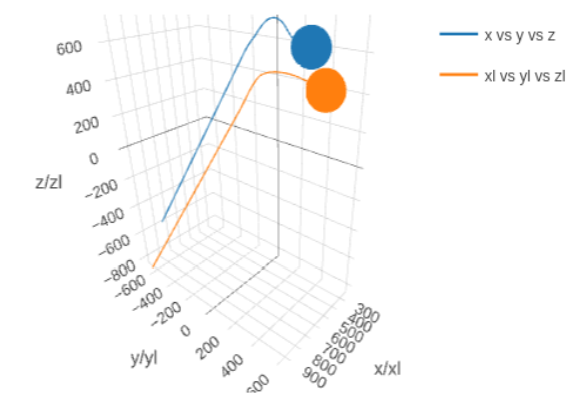}}
    \wrap{\includegraphics[width=0.30\textwidth]{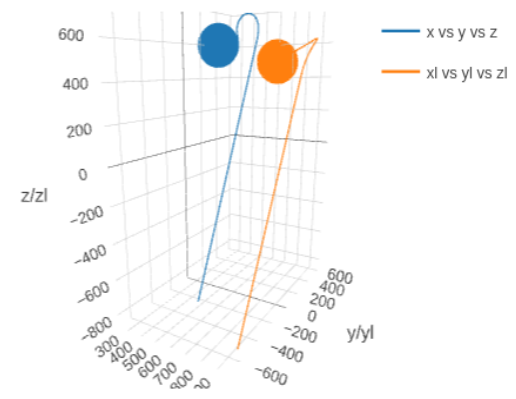}}        
    \caption{Similar plot to the one in \cref{pursuitGameSim1}, but using different initial velocities while keeping the same bounds on the size of the plot; this leaves out the point of the capture.
    }
    \label{pursuitGameSim2}
\end{figure}

Finally by taking advantage of the variability results presented
in~\cref{sec:visualiser} we very briefly study the effects of using different velocities
in this pursuit game. Specifically we adjust the angular velocity of the
pursuer along each axis to be either \code{(1/40)*2*pi()}$\sfrac{rad}{s}$ or
\code{(1/100)*2*pi()}$\sfrac{rad}{s}$, whilst keeping all other aspects. The resulting graphical representation
(after 220 seconds) is shown in~\cref{pursuitGameSim3}.
From the plots we observe that the pursuer successfully captures the evader
when the angular velocity is \code{(1/40)*2*pi()}$\sfrac{rad}{s}$ at the
position \code{(692.07,415.62,464.63)} in 34.8 seconds (left plot). However with an angular
velocity of \code{(1/100)*2*pi()}$\sfrac{rad}{s}$,
the pursuer does not capture the~evader in this time frame (middle plot).
These simulations showcase Lince's ability to model and simulate complex
scenarios, thus providing valuable insights into a system's behavior.

\begin{figure}[tb!]
    \centering 
    \wrap{\includegraphics[width=0.38\textwidth]{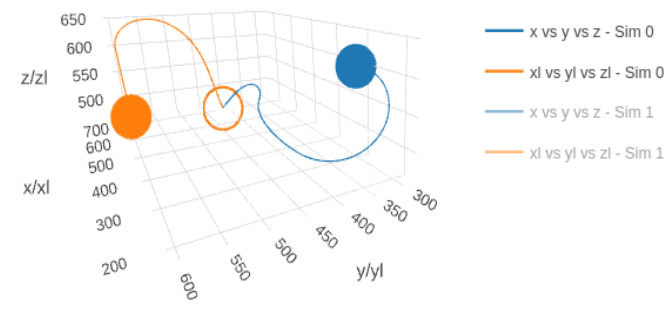}}  
    \wrap{\includegraphics[width=0.30\textwidth]{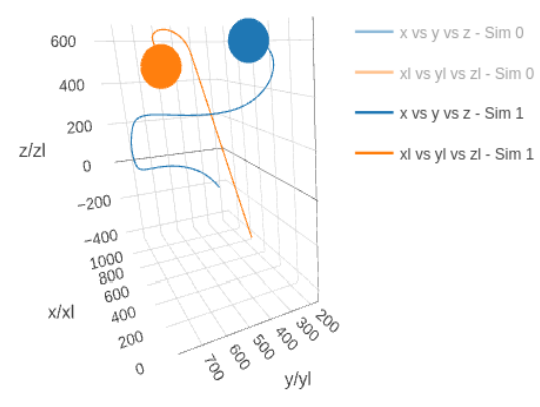}}  
    \wrap{\includegraphics[width=0.30\textwidth]{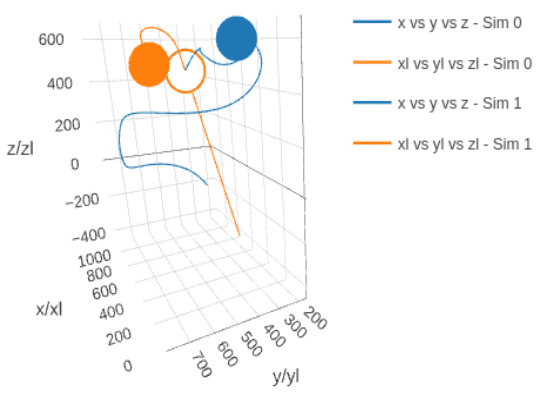}}
    \caption{Two simulations (left and middle) of a pursuit game using different initial velocities
      (\code{(1/40)*2*pi()}$\sfrac{rad}{s}$ and \code{(1/100)*2*pi()}, respectively);
      the right plot depicts both simulations overlaid.
    }
    \label{pursuitGameSim3}
\end{figure}

\smallskip
\noindent
\textbf{A brief overview of Lince's time performance.}
As shown in the previous example, Lince still has a few limitations concerning
performance. In order to give the reader a more concrete idea of them
we provide next an overview of how Lince fares perfomance-wise against the
examples presented in this paper.  First we need to give further context on how
Lince operates.

The first main observation is that now that Lince is equipped with an effective
numerical solver (recall~\cref{sec:interpreter}) it can operate in two starkly
different ways: one analytical with exact methods that rely on SageMath's
framework~\cite{sage}, the other numerical, based on progressively closer
approximations as described in~\cref{sec:interpreter}. Both operation modes
have significant differences performance-wise: most notably the former is
obviously slower and gives timeouts much more frequently than the latter
(recall our RLCS example in~\cref{sec:interpreter}). Interestingly the
bottleneck hinges not only on the employment of a precise solver, but also on
the fact that:
\begin{enumerate}
        \item this solver is external to Lince, specifically our tool needs to
                interact with a server, with all the usual delays that this
                implies;
        \item along the evaluation of a hybrid program, Lince needs to simplify
                resulting expressions over and over to make them tractable (due
                to them being symbolic and not numerical). 
\end{enumerate}
We saw first-hand in~\cref{sec:interpreter} how all these extra tasks running
behind the curtains inhibit Lince to simulate programs such as the RCLS
circuit.  The numerical solver, on the other hand, avoid these problems, but at
the cost of less precision which may have deep implications if one wishes to
have full guarantees that a simulation is correct, particularly if the system
at hand is chaotic~\cite{perko2013differential}. Needless to say, to find
methods that have the virtues of both approaches is a very interesting
challenge.

\newcommand{\colm}[1]{\textbf{\emph{#1}}\xspace}

\cref{table:time_analytics} lists several execution times of Lince against
different variations of the examples presented in the paper.
More specifically, each row represents one of our three examples with varying sampling times and total number of iterations. The example AEB is a variation of AEBOM, where the vehicle stops instead of performing an overtaking manoeuvre.  All these examples are fully available in our improved Lince online.

\begin{table}[tb!]
\centering
\caption{An overview of Lince's time performance with respect to the examples discussed
in this paper.  We consider different sampling times, number of
iterations, and both exact and approximate methods.}
\label{table:time_analytics}
\newcommand{\hd}[1]{\textbf{\begin{tabular}[c]{@{}c@{}}#1\end{tabular}}}
\begin{tabular}{cccccc}
\toprule
\textbf{} &
  \hd{Sampling \\ Time} &
  \hd{Nº of \\ Iterations} &
  \hd{Time\\ Symb-Server} &
  \hd{Time\\ Symb-Total} &
  \hd{Time\\ Numerical-Total} \\ \midrule
\multirow{3}{*}{\textbf{RLCS}}           & 0.01s & 1000 & - & - & 11.46s \\                                         & 0.1s & 1000 & - & - & 10.98s \\                                         & 1s & 150 & - & - & 1.14s \\ \midrule
\multirow{3}{*}{\textbf{AEB}}           & 0.01s & 184 & 23.56s & 23.70s & 0.41s \\                                        & 0.1s & 19          & 13.04s & 13.08s          & 0.18s \\                                        & 1s & 2          & 11.90s & 11.97s          & 0.14s \\ \midrule
\multirow{3}{*}{\textbf{AEBOM}}         & 0.01s & 1000 & - & - & 8.85s \\                                        & 0.1s  & 128          & - & -          & 0.62s \\                                        & 1s &  21         & - & -          & 0.35s \\ \midrule
\multirow{3}{*}{\textbf{Pursuit Games}} & 0.01s & 1000 & - & - & 66.60s \\                                        & 0.1s & 322          & - &  -         & 18.26s \\                                        & 1s & 150          & - &  -         & 7.85s \\ \bottomrule
\end{tabular}
\end{table}

We used a Linux laptop with a Intel quad-core i5 processor and 16GB RAM running both the server and the client.
The columns
\colm{Sampling time} and \colm{Nº of Iterations}
refer respectively to the rate at which
computational tasks need to be performed and the total number of times the
while-loop in the program involved is unfolded.
The column \colm{Time Symb-Total} presents the time since a new program is loaded, before parsing, until the plot is displayed in the browser.
The column \colm{Time Symb-Server} measures only the time taken since the launch of a dedicated process running SageMath until it is terminated at the end of a trajectory.
The column \colm{Time Numerical-Total} measures the time taken since a program is loaded until its plot is displayed, computed using numerical approximations.
Some observations over the values on \cref{table:time_analytics} follow below.
\begin{itemize}
    \item Most examples, except for AEB, reach a timeout (set in our server) when using the symbolic analysis, marked in the table with ``-''. The feasibility of AEB is mainly due to the smaller number of required calls to the symbolic engine.

    \item In the AEB example we observe that, when using exact methods, 
        around 99\
    \item The numerical mechanisms in the AEB example yield simulations
            significantly faster than in the exact counterpart. 

    \item The total time taken to numerically simulate the RLCS and AEBOM
            examples are shorter than in the Pursuit Games example. This is
            because these two examples involve fewer computations and the
            Pursuit Games use a 3D scatter plot, which is more
            computationally intensive than the 2D scatter plot.

    \item Larger sampling times imply reduced times in generating both the
            exact and numerical plots, due to the decreased number of
            computational operations.
            Consequently, it takes longer to simulate controllers with higher precision that actuate on physical processes such as movement, velocity, and time.
            However, many critical systems, e.g., in the context of autonomous driving and other embedded systems, may require such a high precision.

\end{itemize}

\section{Conclusion and Future Work}
\label{sec:conc}

We presented an improved version of Lince, which can now handle a broader class
of hybrid programs
and aims overall at improving user experience.  As previously discussed, this required
an extension with the possibility of failure of the operational semantics
introduced in~\cite{goncharov2020implementing}, the implementation of an
efficient numerical solver, and more informative error messages, among other
things.

We believe that our work opens up several research paths that we would like to
explore next. For example, thanks to the numerical solver it is now
straightforward to extend our language with non-linear differential equations,
which widens even more the range of programs that Lince can currently tackle.
Another interesting research path is the addition of probabilistic constructs
to Lince, such as measure sampling. We conjecture that this could be handled
easily in Lince via a random-number generator and part of the implemented
variability mechanisms that were presented in~\cref{sec:visualiser}. 

Yet another interesting research line is to connect Lince to the theorem prover
for hybrid programs KeYmaera X~\cite{Platzer3} -- specifically the connection
would consist of a suitable encoding from programs written in Lince to programs
written in KeYmaera X. Such would establish a workflow in which the engineer
first \emph{analyses} a given hybrid program via simulation mechanisms
(provided by Lince) and subsequently \emph{proves} properties about this program
(\emph{e.g.} correctness) via KeYmaera X.

\smallskip
\noindent
\textbf{Acknowledgments.}
This work is financed by National Funds through FCT - Fundação para a Ciência e
a Tecnologia, I.P. (Portuguese Foundation for Science and Technology) within
the project IBEX, with reference 10.54499/PTDC/CCI-COM/4280/2021.
This work is also partially supported by National Funds through FCT/MCTES, within the CISTER Unit (UIDP/UIDB/04234/2020); and by the EU/Next Generation, within the Recovery and Resilience Plan, within project Route 25 (TRB/2022/00061 -- C645463824-00000063).

\bibliographystyle{eptcs}
\bibliography{src/bib}

\end{document}